\newtheorem{assumption}{Assumption}
\newtheorem{theorem}{Theorem}
\newtheorem{remark}{Remark}
\newtheorem{corollary}{Corollary}
\begin{document}

\title{Integration of Nonlinear Disturbance \\ Observer within Proxy-based Sliding Mode \\ Control for Pneumatic Muscle Actuators}

\author{Yu~Cao, 
        Jian~Huang$^*$, 
        Dongrui~Wu$^*$, 
        Mengshi~Zhang, 
        Caihua~Xiong, 
        Zhijun~Li,
%
\thanks{Y. Cao, J. Huang, D. Wu and M. Zhang are with the Key Laboratory for Image Processing and Intelligent Control of Education Ministry of China, School of Artificial Intelligence and Automation, Huazhong University of Science and Technology, Wuhan, China (e-mail: cao\_yu@mail.hust.edu.cn; huang\_jan@mail.hust.edu.cn; drwu@hust.edu.cn; dream\_poem@hust.edu.cn). }
\thanks{C.-H. Xiong is with the School of Mechanical Science Engineering and the State Key Laboratory of Digital Manufacturing Equipment and Technology, Huazhong University of Science and Technology, Wuhan 430074, China (e-mail: chxiong@hust.edu.cn). }
\thanks{Z. Li is with the College of Automation Science and Engineering, South China University of Technology, Guangzhou 510641, China (email: zjli@ieee.org).}
}

\markboth{XXX,~Vol.~xx, No.~x, January~2019}%
{Shell \MakeLowercase{\textit{et al.}}: Bare Demo of IEEEtran.cls for IEEE Journals}

\maketitle

\begin{abstract}
This paper presents an integration of nonlinear disturbance observer within proxy-based sliding mode control (IDO-PSMC) approach for Pneumatic Muscle Actuators (PMAs). Due to the nonlinearities, uncertainties, hysteresis, and time-varying characteristics of the PMA, the model parameters are difficult to be identified accurately, which results in unmeasurable uncertainties and disturbances of the system. To solve this problem, a novel design of proxy-based sliding mode controller (PSMC) combined with a nonlinear disturbance observer (DO) is used for the tracking control of the PMA. Our approach combines both the merits of the PSMC and the DO so that it is effective in both reducing the ``chattering" phenomenon and improving the system robustness. A constrained Firefly Algorithm is used to search for the optimal control parameters. Based on the Lyapunov theorem, the states of the PMA are shown to be globally uniformly ultimately bounded. Extensive experiments were conducted to verify the superior performance of our approach, in multiple tracking scenarios.
\end{abstract}

\begin{IEEEkeywords}
Pneumatic muscle actuator, nonlinear disturbance observer, proxy-based sliding mode control, constrained firefly algorithm.
\end{IEEEkeywords}

\IEEEpeerreviewmaketitle

\section{Introduction}

\IEEEPARstart{T}{he} Pneumatic Muscle Actuator (PMA) has been widely used in a variety of fields, due to its attractive characteristics, i.e., high power/weight ratio, no mechanical parts, low cost, etc \cite{Dzahir:2014, Andrikopoulos:2011}. Its driving force is converted from the the air pressure of the inner bladder, which has the features of nonlinearity, hysteresis, and time-varying parameters \cite{Caldwell:1995}, making its modeling and control very challenging. Different control strategies have been proposed for the PMA, including PID-based control \cite{Andrikopoulos:2014,Zhang:2017}, sliding mode control (SMC) \cite{Cao:2018}, nonlinear model predictive control \cite{Huang:2018, Huang:2016}, fuzzy control \cite{Xie:2011}, adaptive control \cite{Zhu:2017}, etc. Most of them are model-based, which require an accurate mathematic model of the PMA. Unfortunately, such an accurate model is very difficult to obtain in practice. Furthermore, different applications require the PMA to accurately track various reference trajectories with different loads and motion frequencies. Thus, there is a strong demand for robust PMA control strategies.

SMC is a well-known model-based approach to deal with uncertain systems, due to its ability to handle variations and external disturbances. However, it could lead to a ``chattering" phenomenon that may cause serious damages to the actuator. A remedy is proxy-based sliding mode control (PSMC) \cite{Kikuuwe:2010}, which does not need an accurate system model, and hence the complicated system identification process can be avoided. In addition, through the ``proxy", PSMC can make the system compliant to external disturbances and reduce the ``chattering" phenomenon significantly. Thus, it has been successfully used in different applications \cite{Gu:2015, Damme:2009, Chen:2016}. However, the stability analysis of PSMC depends on a strong conjecture (see Conjecture~1 in \cite{Kikuuwe:2010}), which may not always be satisfied in practice.

Nonlinear Disturbance Observer (DO) based control is a common strategy for improving the control performance. Its basic idea is to estimate the disturbances/uncertainties from measurable variables before a control action is taken. Consequently, the influence of the disturbances/uncertainties can be suppressed, and the system becomes more robust \cite{Ohishi:1987, Han:2009}. Multiple DO-based control strategies have been proposed to compensate the influence of disturbances/uncertainties \cite{Ginoya:2014, Huang:2015, Huang:2018-1, Chen:2017}.
However, to our best knowledge, there has not been any research on DO-based PSMC. This may be due to two challenges. First, the PSMC is a model-free control strategy, whereas a typical DO-based controller requires a mathematical model of the controlled object. Therefore, the integration of PSMC and DO is not straightforward. Second, a more rigorous analysis is needed to guarantee the stability of DO-based PSMC, which should not be based on the strong conjecture in \cite{Kikuuwe:2010}. Additionally, in practice the control performance is largely determined by the optimality of the control parameters, which are not easy to be obtained, especially when there are many such parameters. Traditionally, the control parameters are empirically tuned by the controller designer, which takes time and experience. An automated optimization approach is highly desirable.

This paper proposed an integration of nonlinear disturbance observer within proxy-based sliding mode control (IDO-PSMC), automatically tuned by a constrained Firefly Algorithm (FA), which can eliminate the ``chattering" phenomenon and increase the robustness of the system. Our main contributions are: 1) the proposed control strategy called IDO-PSMC for PMA tracking tasks, with automatic parameter optimization; 2) theoretical analysis on the stability of the closed-loop system and the effect of the proxy mass; 3) real-world experiments for validating the effectiveness and robustness of the proposed control strategy with various reference trajectories.

The rest of this paper is organized as follows. Section~II introduces the three-element model of the PMA with disturbance. Section~III proposed the IDO-PSMC. Section~IV analyzes its stability. Sections~V presents real-world experiments to demonstrate the effectiveness and robustness of the IDO-PSMC. Finally, Section~VI draws conclusions.

\section{The Three-Element Model of the PMA}

The generalized three-element model of the PMA is a parallel connection of a contractile element, a spring element, and a damping element \cite{Reynolds:2003}, as shown in Fig.~1. The contractile length varies with the air pressure of inner bladder. The dynamics of the PMA can be expressed as:
\begin{align} \label{2.1}
\left\{ {\begin{array}{*{20}{c}}
{m\ddot x + b(P)\dot x + k(P)x = f(P) - mg}\\
{{b_i}(P) = {b_{i0}} + {b_{i1}}P} \quad (inflation)\\
{{b_d}(P) = {b_{d0}} + {b_{d1}}P} \quad (deflation)\\
{\begin{array}{*{20}{c}}
{k(P) = {k_0} + {k_1}P}\\
{f(P) = {f_0} + {f_1}P}
\end{array}}
\end{array}} \right.
\end{align}
where $m$, $x$, $P$ are the mass of load, the contractile length of PMA, and the air pressure, respectively. $b(P)$, $f(P)$, $k(P)$ are the damping coefficient, the contractile force, and the spring coefficient, respectively. The damping element is a highly nonlinear function of the air pressure $P$, which is difficult to identify in practice. So, a piecewise linear function is used to approximate the nonlinear one and to represent the hysteresis in the dynamic model.

\begin{figure} [t] \centering
\includegraphics[width=3.4in,height=2.4in]{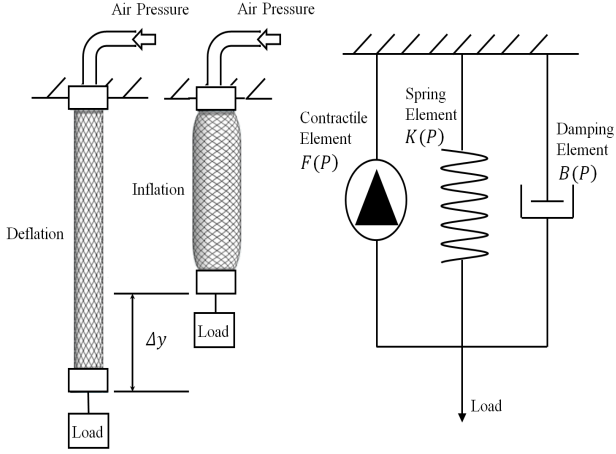}
\caption{The PMA (a) and its three-element model (b).}
\label{PMA_MODEL_FIG}
\end{figure}

Clearly, the modeling and control of the PMA involves lots of uncertainties, including modeling error, friction, inaccurate parameters, changing loads, etc. Let $\tau (t)$ denote the sum of these uncertainties. Then, the dynamic model of the PMA can be rewritten as:
\begin{align}  \label{2.2}
\left\{ \begin{array}{l}
\ddot x = f(x,\dot x) + b(x,\dot x)u + \tau (t)\\
f(x,\dot x) = \frac{1}{m}({f_0} - mg - {b_0}\dot x - {k_0}x)\\
b(x,\dot x) = \frac{1}{m}({f_1} - {b_1}\dot x - {k_1}x)
\end{array} \right.
\end{align}
where $u$ is the air pressure, and $f(x,\dot x)$ and $b(x,\dot x)$ are nonlinear functions.

In tracking applications, (\ref{2.2}) can be viewed as a single-input single-output (SISO) system, in which the input $u$ is the input air pressure, and the output $x$ is the displacement of the PMA. It is also a typical second-order nonlinear model with disturbances. $b_0$ ($b_1$) will be selected as $b_{i0}$ ($b_{i1}$) or $b_{d0}$ ($b_{d1}$) according to the states of the PMA aeration, and $k_0$ ($k_1$) as $k_{01}$ ($k_{02}$) or $k_{11}$ ($k_{12}$) according to the inner pressure of the PMA.

\section{Integration of Nonlinear Disturbance Observer within Proxy-based Sliding Mode Control for the PMA}

This section introduces our proposed controller for the PMA.

\subsection{The Nonlinear Disturbance Observer (DO)}

The model disturbances/uncertainties are inevitable and usually cannot be directly measured in real-world applications, which deteriorate the control performance of the PMA. However, a nonlinear DO can be used to estimate them from measurable variables of the PMA. A second-order DO, first proposed in \cite{Ginoya:2014}, was used in our study. Also, the results can also be easily extended to controllers with higher-order DOs.

The following assumption was used in our design:
\begin{assumption}
The disturbance $\tau (t)$ is continuous and satisfies
\begin{align}
\left|\frac{{{d^i}\tau (t)}}{{d{t^i}}}\right| \le \varepsilon,  \qquad  for \quad i = 0, 1, 2\label{3.1}
\end{align}
where $\varepsilon$ is a positive number.
\end{assumption}

Then, the nonlinear DO is represented as:
\begin{align}
\hat \tau  &= {p_1} + {l_1}\dot x\label{3.2} \\
{\dot p_1} &=  - {l_1}\left( {f(x,\dot x) + b(x,\dot x)u + \hat \tau } \right) + \hat {\dot \tau }\label{3.3}\\
\hat {\dot \tau} & = {p_2} + {l_2}\dot x\label{3.4}\\
{{\dot p}_2}& =  - {l_2}\left( {f(x,\dot x) + b(x,\dot x)u + \hat {\tau} } \right) \label{3.5}
\end{align}
where $\hat \tau$ and $\hat {\dot \tau }$ are the estimates of $\tau$ and $\dot \tau$, respectively. $p_1$ and $p_2$ are auxiliary variables. $l_1$ and $l_2$ are positive constants.

The estimation errors is:
\begin{align} \label{3.6}
\tilde \tau  &= \tau  - \hat \tau \\
\tilde {\dot \tau}  &= \dot \tau  - \hat {\dot \tau}
\end{align}
where $\tilde \tau$ is the estimation error of $\tau$, and $\tilde {\dot \tau}$ the estimation error of $\dot \tau$.

Assumption 1 implies that the disturbance $\tau(t)$ is continuous, and ${{d^i}\tau (t)}/{d{t^i}}(i = 0, 1, 2)$ is bounded. By using the similar technique in \cite{Ginoya:2014}, the dynamics of the error of the nonlinear DO is:
\begin{align} \label{3.12}
{\bf{\dot {\tilde e}}} = {\bf{A}}_1 {\bf{\tilde e}} + {\bf{B}}_1\ddot \tau
\end{align}
where
\[{\bf{\tilde e}} = \left[ {\begin{array}{*{20}{c}}
{\tilde \tau }\\
{\tilde {\dot \tau} }
\end{array}} \right],{\bf{A}}_1 = \left[ {\begin{array}{*{20}{c}}
{ - {l_1}}&1\\
{ - {l_2}}&0
\end{array}} \right],{\bf{B}}_1 = \left[ {\begin{array}{*{20}{c}}
0\\
1
\end{array}} \right]\]
The eigenvalues of ${\bf{A}}_1$ can be placed arbitrarily. We select ${l_1}$ and ${l_2}$ to make the real part of these eigenvalues negative.

There exits a positive definite matrix ${\bf{P}}_1$ satisfying
\begin{align} \label{3.1.13}
{\bf{A}}_1^T{{\bf{P}}_1} + {{\bf{P}}_1}{{\bf{A}}_1} =  - {{\bf{Q}}_1}
\end{align}
for any given positive definite matrix ${{\bf{Q}}_1}$.

Choosing a Lyapunov function
\begin{align} \label{3.1.14}
V({\bf{\tilde e}}) = {{\bf{\tilde e}}^T}{{\bf{P}}_1}{\bf{\tilde e}}
\end{align}
and differentiating $V({\bf{\tilde e}})$, we have
\begin{align} \label{3.1.15}
\dot V({\bf{\tilde e}}) &= {{\dot{\bf{ \tilde e}}^T}}{{\bf{P}}_1}{\bf{\tilde e}} + {{{\bf{\tilde e}}}^T}{{\bf{P}}_1}\dot {\bf{\tilde e}}\\
 &=  - {{{\bf{\tilde e}}}^T}{{\bf{Q}}_1}{\bf{\tilde e}} + 2{{{\bf{\tilde e}}}^T}{{\bf{P}}_1}{{\bf{B}}_1}\ddot \tau \\
& \le  - {\lambda _{\min }}({{\bf{Q}}_1}){\left\| {{\bf{\tilde e}}} \right\|^2} + 2\left\| {{{\bf{P}}_1}{{\bf{B}}_1}} \right\|\left\| {{\bf{\tilde e}}} \right\|\varepsilon \\
& \le  - \left\| {{\bf{\tilde e}}} \right\|({\lambda _{\min }}({{\bf{Q}}_1})\left\| {{\bf{\tilde e}}} \right\| - 2\left\| {{{\bf{P}}_1}{{\bf{B}}_1}} \right\|\varepsilon )
\end{align}
where $|| \cdot ||$ denotes the 1-norm. Therefore, the norm of the estimation error is bounded by:
\begin{align} \label{3.1.16}
\left\| {{\bf{\tilde e}}} \right\| \le {\lambda _{\rm{1}}}
\end{align}
where
\[{\lambda _{\rm{1}}}{\rm{ = }}\frac{{2\left\| {{{\bf{P}}_1}{{\bf{B}}_1}} \right\|\varepsilon }}{{{\lambda _{\min }}({{\bf{Q}}_1})}}\]

Hence, if Assumption 1 holds, then the disturbance estimation error is uniformly ultimately bounded.

\subsection{Integration of Nonlinear Disturbance Observer within Proxy-based Sliding Mode Control}

In the PSMC, an imaginary object called ``proxy", assumed to be connected to the physical actuator, is presented. Before introducing the IDO-PSMC, we define the following sliding manifolds:
\begin{align}
{S_q} &= {{\dot x}_d} - \dot x + c_1({x_d} - x) + c_2\displaystyle{\int {\left( {{x_d} - x} \right)dt}}\label{3.13}\\
{S_p} &= {{\dot x}_d} - {\dot x}_p + c_1({x_d} - {x_p}) + c_2\displaystyle{\int {\left( {{x_d} - {x_p}} \label{3.14} \right)dt}}
\end{align}
where $c_1$ and $c_2$ are positive constants, $x_d$ the desired trajectory, and $x_p$ and $x$ the proxy position and the PMA's displacement, respectively. The sliding manifolds directly reflect the tracking states of the proxy and the PMA. The diagram of the IDO-PSMC is shown in Fig.~\ref{IDO-PSMC structure}.

\begin{figure} [htpb]\centering
\includegraphics[width=3.2in,height=1.78in]{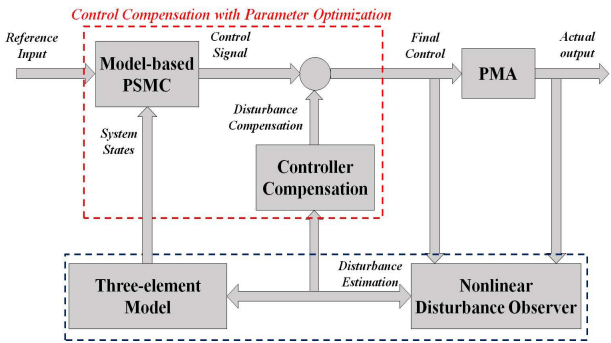}
\caption{The diagram of the proposed IDO-PSMC.}
\label{IDO-PSMC structure}
\end{figure}

A PID-type virtual coupling is adopted to drive the states of the PMA ${\bf{X}} = {[\int x dt,x,\dot x]^T}$ to the states of the proxy ${{\bf{X}}_p} = {[\int {x_p} dt,{x_p},{\dot x_p}]^T}$. By taking the DO into consideration, we design the following relation based on the idea of the SMC:
\begin{align} \label{3.15}
\begin{array}{l}
{{\dot S}_q} + {K_p}({x_p} - x) + {K_i}\displaystyle{\int {\left( {{x_p} - x} \right)dt}} \\
\qquad \qquad + {K_d}({{\dot x}_p} - \dot x) + \tilde \tau  - \hat {\dot \tau}  = 0
\end{array}
\end{align}
where $K_p$, $K_i$ and $K_d$ are positive constants.

\begin{remark}
Once the sliding mode manifold $S_q$ is defined, the controller can be designed using ${\dot S_q} =  - k\cdot {\mathop{\rm sgn}} ({S_q})$, which may cause severe chattering. Hence, we replace $- k{\mathop{\rm sgn}} ({S_q})$ by a PID controller to establish a connection between the controlled object and the proxy, as shown in Fig.~\ref{RPSMC-principle}. Note that (\ref{3.15}) can also be rewritten as:
\begin{align} \label{3.16}
{\bf{\dot X}} = \left[ {\begin{array}{*{20}{c}}
   0 & 1 & 0  \\
   0 & 0 & 1  \\
   0 & { - {c_2}} & { - {c_1}}  \\
\end{array}} \right]{\bf{X}} + \left[ {\begin{array}{*{20}{c}}
   0  \\
   0  \\
   1  \\
\end{array}} \right]{u_l} + \left[ {\begin{array}{*{20}{c}}
   0  \\
   0  \\
   1  \\
\end{array}} \right]\rho
\end{align}
where ${u_l} ={K_p}({x_p} - x) + {K_i}\displaystyle{\int {\left( {{x_p} - x} \right)dt}}  + {K_d}({{\dot x}_p} - \dot x)$, and ${\bf{\rho }} = {{\ddot x}_d} + c_1{{\dot x}_d} + c_2 {x_d} + \tilde \tau  - \hat {\dot \tau}$.

It is clear that (\ref{3.16}) can be regraded as the local relation between the controlled object and the proxy. This is a linear system with PID control, where ${\bf{X}}$ are the system's states, ${\bf{X}}_p$ the desired states, and ${{\ddot x}_d}$, ${\dot {x}}_d$ and $x_d$ varying parameters unrelated to the system's states. This PID controller drives the PMA's states ${\bf{X}}$ to the proxy's states ${\bf{X}}_p$, if the controller parameters are tuned properly.
\end{remark}

\begin{figure}[htpb] \centering
\includegraphics[width=3.2in,height=0.8in]{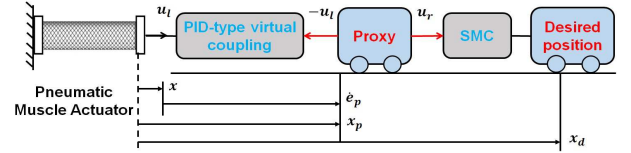}
\caption{The principle of proxy-based sliding mode control.}
\label{RPSMC-principle}
\end{figure}

Meanwhile, the DO can be regarded as compensation to the controller. Hence, the control signal fed into the PMA can be obtained according to (\ref{2.2}), (\ref{3.6}), and (\ref{3.15}):
\begin{align} \label{3.17}
\begin{array}{*{20}{l}}
{u = \frac{1}{{b(x,\dot x)}}[{{\ddot x}_d} + {c_1}({{\dot x}_d} - \dot x) + {c_2}({x_d} - x) - f(x,\dot x)}  \\
+ {K_p}({x_p} - x) + {K_i}\displaystyle{\int {\left( {{x_p} - x} \right)dt}}  + {K_d}({{\dot x}_p} - \dot x) \\
 - \hat \tau  - \hat {\dot \tau} ]  \\
\end{array}
\end{align}

An equivalent sliding mode controller is applied to generate the control signal $u_r$ between the reference and the proxy, i.e.,
\begin{align} \label{3.20}
{u_r} = \Gamma\cdot {\mathop{\rm sgn}} ({S_p})
\end{align}
where ${\rm{{\Gamma} > 0}}$, and $sgn(S_p)$ is the signum function.

Let $m_p>0$ be the so-called proxy mass. Then,
\begin{align} \label{3.21}
{m_p}{\dot S_p} =  - {u_r} + {u_l}
\end{align}
i.e., the proxy is simultaneously influenced by the virtual coupling and the equivalent sliding mode controller.

Combining (\ref{3.14}), (\ref{3.20}), and (\ref{3.21}), we have
\begin{align} \label{3.26}
{{{\ddot x}_p}{\rm{ = }}\frac{1}{{{m_p}}}(\Gamma {\rm{sgn}}({S_p}) - {K_p}({x_p} - x) - {K_i}\displaystyle{\int {\left( {{x_p} - x} \right)dt}} }  \nonumber\\
- {K_d}({{\dot x}_p} - \dot x)) + {{\ddot x}_d} + {c_1}({{\dot x}_d} - {{\dot x}_p}) + {c_2}({x_d} - {x_p})
\end{align}
The control signal can then be computed from (\ref{3.14}), (\ref{3.17}) and (\ref{3.26}).

\subsection{Parameters Optimization}

There are a number of control parameters to be determined during the controller design, including $\Gamma$, $c_1$, $c_2$, ${K_p}$, ${K_i}$, ${K_d}$, $l_1$, and $l_2$. A constrained FA is used to tune them.

The FA \cite{Yang:2009} is based on three assumptions:
\begin{enumerate}
\item All fireflies are unisex so that one firefly is attracted to others regardless of sex.

\item The attractiveness of a firefly is mainly determined by its brightness, which means that a less bright firefly will move towards a brighter one.

\item The brightness of a firefly is determined by the objective function.
\end{enumerate}
The detailed FA is introduced next.

Let ${{\bf{s}}_j} = {[{\Gamma},c_1,c_2,{K_p},{K_i},{K_d},{l_1},{l_2}]^T}$ ($j = 1,...,n$) be the $j$th firefly that encodes all the parameters to be optimized in PMA tracking. The FA minimizes the average error and the maximum deviation:
\begin{align} \label{3.27}
h({{\bf{s}}}_j) = \frac{1}{N}\sum\limits_{t = 1}^N {|{x_d}(t) - x(t)|}   + \lambda  \mathop {\max }\limits_{t \in [1,N]} (|{x_d}(t) - x(t)|)
\end{align}
where $\lambda > 0$ is a trade-off constant, and $N$ the total number of samples. The brightness of the firefly ${{\bf{s}}_j}$, $I_j$, is proportional to $1/h({{\bf{s}}}_j)$.

The attractiveness of the $i$th firefly to the $j$th, $\beta_{ij}$, is:
\begin{align} \label{3.28}
\beta_{ij} = {\beta _0}{e^{ - \gamma {{r_{^{ij}}^2}}}}
\end{align}
where ${\beta _0}$ is the attractiveness at $r = 0$, $\gamma $ is a fixed value related to the length scale in the optimization, and
\begin{align}
{r_{ij}} = \left\| {{{\bf{s}}_i} - {{\bf{s}}_j}} \right\| = \sqrt {{{({{\bf{s}}_i} - {{\bf{s}}_j})}^T}({{\bf{s}}_i} - {{\bf{s}}_j})}.
\end{align}
is the distance between the two fireflies.

The movement of a firefly $i$ to another brighter firefly $j$ is computed by:
\begin{align} \label{3.29}
{{\bf{s}}_{i,k}} = {{\bf{s}}_{i,k}} + {\beta _0}{e^{ - \gamma r_{^{ij}}^2}}({{\bf{s}}_{j,k}} - {{\bf{s}}_{i,k}}) + \alpha (\delta  - \frac{1}{2})
\end{align}
where ${{\bf{s}}_{i,k}}$ is the $k$th element in ${{\bf{s}}_{i}}$, and $\delta  \sim N(0,1)$. The second term comes from the attractiveness, and the third term introduces some randomization.

The optimization is meaningless if the resulting controller is unstable. Thus, special attention is paid to make sure fireflies associated with unstable controllers less attractive. More specifically, when a firefly does not meet the stability conditions, it will not be applied to the plant, and a large penalty term is added to its objective function to make it less attractiveness.

The pseudo-code of the constrained FA is given in Algorithm 1.

\begin{algorithm}[!t]
	\caption{The constrained FA for IDO-PSMC.}
	\begin{algorithmic}[1]
		\STATE Generate initial population of fireflies ${\bf{s}}_i (i=1,2,...,n)$
        \WHILE{$t < MaxGeneration$}
            \FOR{$i=1$ to $n$}
            \IF{${\bf{s}}_i$ satisfies the stability conditions}
                \STATE Apply ${\bf{s}}_i$ to the IDO-PSMC for the PMA.
                \STATE Obtain the experimental results.
            \ENDIF
                \STATE Update objective function $h({\bf{s}}_i)$ and brightness $I_i$.
            \ENDFOR
            \FOR{$i=1$ to $n$}
                \FOR{$j=1$ to $n$}
                    \STATE ${r_{ij}} =  {\left[{{({{\bf{s}}_i} - {{\bf{s}}_j})}^T}({{\bf{s}}_i} - {{\bf{s}}_j})\right]}^{1/2} $
                    \IF{${I_i} < {I_j}$}
                    \STATE Compute the attractiveness $r_{ij}$ in (\ref{3.28}).
                    \STATE Move Firefly $i$ towards Firefly $j$.
                    \ENDIF
                \ENDFOR
            \ENDFOR
        \STATE Rank the fireflies and find the current best
        \ENDWHILE
	\end{algorithmic}
\end{algorithm}

\section{Stability Analysis}

The stability of the proposed controller when $m_p>0$ is analyzed in this section. To our best knowledge, this case has not adequately addressed before.

For the convenience of presentation, we define:
\[{{\bf{K}}_m} = \left[ {\begin{array}{*{20}{c}}
   {{K_i}{c_2}} & 0 & 0  \\
   0 & \varpi  & 0  \\
   0 & 0 & {{K_d}}  \\
\end{array}} \right],\]
where $\varpi  = {K_p}{c_1} - {K_i} - {K_d}{c_2}$.
\begin{theorem}
Consider the nonlinear system in (\ref{2.2}). If Assumption 1 holds, and proper $l_1$ and $l_2$ are chosen to make the real parts of the eigenvalues of ${\bf{A}}_1$ negative, then the norm of tracking error between the proxy states ${\bf{X}}_p$ and the system states ${\bf{X}}$ is uniformly ultimately bounded, and a sliding motion on the surface (\ref{3.14}) can be guaranteed when the IDO-PSMC satisfies:
\[{m_p} > 0,\lambda ({{\bf{K}}_c}) > 0,\Gamma  \ge {\lambda _2}({K_p} + {K_i} + {K_d}),{\varpi } > 0\]
where \[{\lambda _2} = \frac{{(\varepsilon {\rm{ + }}{\lambda _1})({c_1} + {c_2} + 1)}}{{{\lambda _{\min }}({{\bf{K}}_m})}},\]
\[{{\bf{K}}_c} = \left[ {\begin{array}{*{20}{c}}
   {{K_p}{c_2} + {K_i}{c_1}} & {{K_i} + {K_d}{c_2}}  \\
   {{K_i} + {K_d}{c_2}} & {{K_p} + {K_d}{c_1}}  \\
\end{array}} \right].\]
\end{theorem}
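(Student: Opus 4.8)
The plan is to prove the two assertions — uniform ultimate boundedness of the proxy--system error and existence of a sliding motion on $S_p=0$ — separately, with both built around the matrices $\mathbf{K}_c$ and $\mathbf{K}_m$ in the statement. First I would introduce the proxy--system error $\mathbf{e}=\mathbf{X}_p-\mathbf{X}=[\int(x_p-x)\,dt,\,x_p-x,\,\dot x_p-\dot x]^T=:[\varepsilon_1,\varepsilon_2,\varepsilon_3]^T$. Relation (\ref{3.15}) is equivalent to $\dot S_q=-(K_p\varepsilon_2+K_i\varepsilon_1+K_d\varepsilon_3)-\tilde\tau+\hat{\dot\tau}$, and from (\ref{3.13})--(\ref{3.14}) one has the identity $S_q-S_p=\varepsilon_3+c_1\varepsilon_2+c_2\varepsilon_1$; differentiating the latter and substituting gives that on the manifold $S_p=0$ (so $\dot S_p=0$ in the Filippov sense) the error obeys the linear time-invariant equation $\dot{\mathbf{e}}=\mathbf{A}_e\mathbf{e}+\mathbf{B}_e(\hat{\dot\tau}-\tilde\tau)$, where $\mathbf{B}_e=[0,0,1]^T$ and $\mathbf{A}_e$ is the companion-type matrix with last row $[-K_i,\,-(c_2+K_p),\,-(c_1+K_d)]$. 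Assumption~1 together with the observer bound (\ref{3.1.16}) yields $|\hat{\dot\tau}-\tilde\tau|\le\varepsilon+\lambda_1$, the forcing that will set the ultimate bound.

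For boundedness of $\mathbf{e}$ I would take $V_e=\mathbf{e}^T\mathbf{P}_2\mathbf{e}$ with the natural choice $\mathbf{P}_2=\tfrac12(\mathbf{v}\mathbf{v}^T+\widehat{\mathbf{K}}_c)$, where $\mathbf{v}=[c_2,c_1,1]^T$ and $\widehat{\mathbf{K}}_c$ is $\mathbf{K}_c$ placed in the leading $2\times2$ block (zero last row and column). A routine computation verifies $\mathbf{A}_e^T\mathbf{P}_2+\mathbf{P}_2\mathbf{A}_e=-\mathbf{K}_m$; moreover $\mathbf{P}_2$ is positive definite exactly when $\lambda(\mathbf{K}_c)>0$, and $\mathbf{K}_m$ is positive definite exactly when $\varpi>0$ (its other diagonal entries $K_ic_2$ and $K_d$ being positive by construction). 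Since $\mathbf{P}_2\mathbf{B}_e=\tfrac12\mathbf{v}$, we get $\dot V_e=-\mathbf{e}^T\mathbf{K}_m\mathbf{e}+\mathbf{v}^T\mathbf{e}\,(\hat{\dot\tau}-\tilde\tau)\le-\|\mathbf{e}\|\big(\lambda_{\min}(\mathbf{K}_m)\|\mathbf{e}\|-(c_1+c_2+1)(\varepsilon+\lambda_1)\big)$, so $\dot V_e<0$ whenever $\|\mathbf{e}\|>\lambda_2$; hence $\mathbf{e}$ is uniformly ultimately bounded, with ultimate bound governed by $\lambda_2$ (up to the usual $\sqrt{\lambda_{\max}(\mathbf{P}_2)/\lambda_{\min}(\mathbf{P}_2)}$ factor from passing between $V_e$ and $\|\mathbf{e}\|$).

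For the sliding motion I would use $V_p=\tfrac12 m_p S_p^2$, which is positive definite precisely because $m_p>0$. From the proxy equation (\ref{3.21}) with $u_r=\Gamma\,\mathrm{sgn}(S_p)$ and $u_l=K_p\varepsilon_2+K_i\varepsilon_1+K_d\varepsilon_3$, $\dot V_p=S_p(-\Gamma\,\mathrm{sgn}(S_p)+u_l)=-\Gamma|S_p|+S_p u_l\le-|S_p|\,(\Gamma-|u_l|)$ and $|u_l|\le(K_p+K_i+K_d)\|\mathbf{e}\|_\infty$. Once $\mathbf{e}$ lies within its ultimate bound, $\|\mathbf{e}\|_\infty\le\lambda_2$, so the hypothesis $\Gamma\ge\lambda_2(K_p+K_i+K_d)$ gives $\dot V_p\le0$, and with a strict reaching margin drives $S_p\to0$ in finite time, i.e.\ establishes a sliding motion on (\ref{3.14}); on that surface the error equation above then governs $\mathbf{e}$, closing the argument.

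I expect the main obstacle to be the circular dependence between the two halves together with the discontinuity of $\mathrm{sgn}(S_p)$: the clean error dynamics — and hence the bound $\lambda_2$ — were derived assuming $S_p\equiv0$, whereas $S_p\to0$ was argued using $\|\mathbf{e}\|_\infty\le\lambda_2$, and off the manifold the error equation picks up an extra term $-\mathbf{B}_e\dot S_p$ with $\dot S_p=m_p^{-1}(-\Gamma\,\mathrm{sgn}(S_p)+u_l)$. I would resolve this either with a composite Lyapunov function $V=V_p+V_e$, in which the coupling term $-2\mathbf{e}^T\mathbf{P}_2\mathbf{B}_e\dot S_p=-(S_q-S_p)\dot S_p$ must be dominated by the negative-definite terms (and Filippov solutions treated carefully), or with a two-phase argument that first bounds $\mathbf{e}$ during the reaching phase from the full dynamics and then invokes the on-manifold analysis. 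A secondary care is that the ultimate bound from $V_e$ is a priori only proportional to $\lambda_2$, so one must either absorb the $\lambda_{\max}/\lambda_{\min}$ factor into a slightly conservative $\lambda_2$ or argue invariance of the sublevel set $\{V_e\le\lambda_{\max}(\mathbf{P}_2)\lambda_2^2\}$ directly, so that the stated condition on $\Gamma$ applies as written.
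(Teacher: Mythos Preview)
Your plan is essentially the same as the paper's and has the right two-step skeleton: a composite Lyapunov function to get uniform ultimate boundedness of the proxy--system error, followed by $V_p=\tfrac12 m_p S_p^2$ to obtain the sliding motion on $S_p=0$ once the error is in its ultimate set. Your identification of $\mathbf{K}_c$, $\mathbf{K}_m$, and the bound $\lambda_2$ matches the paper exactly.

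The one place where you and the paper diverge is precisely the obstacle you flag. Your $V_e=\mathbf{e}^T\mathbf{P}_2\mathbf{e}$ with $\mathbf{P}_2=\tfrac12(\mathbf{v}\mathbf{v}^T+\widehat{\mathbf{K}}_c)$ equals, after unpacking, $\tfrac12(\mathbf{v}^T\mathbf{e})^2+\tfrac12[e_p,\dot e_p]\mathbf{K}_c[e_p,\dot e_p]^T$. Since $\mathbf{v}^T\mathbf{e}=S_q-S_p$, this is $\tfrac12(S_q-S_p)^2+V_2$. Differentiating off the manifold produces the coupling term $-(S_q-S_p)\dot S_p$ you worry about, and $\dot S_p$ carries the discontinuous $\Gamma\,\mathrm{sgn}(S_p)/m_p$, so domination is awkward. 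The paper sidesteps this with a one-line change of Lyapunov function: instead of $\tfrac12(S_q-S_p)^2$ it uses $\tfrac12 S_q^2$ directly, taking
\[
V=\underbrace{\tfrac12 m_p S_p^2+\tfrac12 S_q^2}_{V_1}\;+\;\underbrace{\tfrac12[e_p,\dot e_p]\mathbf{K}_c[e_p,\dot e_p]^T}_{V_2}.
\]
Then $\dot V_1=S_p(-\Gamma\,\mathrm{sgn}(S_p)+u_l)+S_q(-u_l-\tilde\tau+\hat{\dot\tau})$, so the $u_l$-terms combine into $u_l(S_p-S_q)=-u_l\,\mathbf{v}^T\mathbf{e}$, a pure product of linear forms in $\mathbf{e}$; adding $\dot V_2$ cancels every cross term and leaves exactly $-\mathbf{e}^T\mathbf{K}_m\mathbf{e}$. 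The remaining $-\Gamma|S_p|+(\hat{\dot\tau}-\tilde\tau)S_q$ is bounded via $S_p=S_q-\mathbf{v}^T\mathbf{e}$, the triangle inequality, and $\Gamma\ge\varepsilon+\lambda_1$, yielding your same inequality $\dot V\le -\|\mathbf{e}\|\big(\lambda_{\min}(\mathbf{K}_m)\|\mathbf{e}\|-(c_1+c_2+1)(\varepsilon+\lambda_1)\big)$, hence $\|\mathbf{e}\|\le\lambda_2$ ultimately. No on-manifold restriction, no circularity, and no two-phase argument is needed. The reaching step is then exactly your $V_p$ computation. In short, your scheme becomes the paper's by replacing the block $\tfrac12(\mathbf{v}^T\mathbf{e})^2$ in $V_e$ with $\tfrac12 S_q^2$; everything else you wrote carries over verbatim.
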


\begin{proof}
To simplify the analysis, a Lyapunov candidate is defined as:
\begin{align} \label{4.1}
{V_1} &= \frac{1}{2}{m_p}S_p^2 + \frac{1}{2}S_q^2\\
{V_2} &= \frac{1}{2}({K_p} + {K_d}{c_1} - {K_i} - {K_d}{c_2})\dot e_p^{\rm{2}}\; \nonumber \\
&\quad + \frac{1}{2}({K_p}{c_2} + {K_i}{c_1} - {K_i} - {K_d}{c_2})e_p^{\rm{2}} \nonumber \\
&\quad  + \frac{1}{2}({K_i} + {K_d}{c_2}){({e_p} + {{\dot e}_p})^2} \nonumber \\
& = \frac{1}{2}\left[ {\begin{array}{*{20}{c}}
   {{e_p}} & {{{\dot e}_p}}  \\
\end{array}} \right]{{\bf{K}}_c}\left[ {\begin{array}{*{20}{c}}
   {{e_p}}  \\
   {{{\dot e}_p}}  \\
\end{array}} \right]
\end{align}
where ${e_p} = \displaystyle{\int {\left( {{x_p} - x} \right)dt}}$.

We have $V > 0$, because $\lambda ({{\bf{K}}_c}) > 0$. Its derivatives are:
\begin{align} \label{4.2}
\dot V = \dot V_1 + \dot V_2.
\end{align}
From (\ref{3.16})-(\ref{3.21}), let
\begin{align} \label{4.3}
{m_p}{{\dot S}_p}& =  - {\Gamma}{\mathop{\rm sgn}} ({S_p}) + {K_p}({x_p} - x) \nonumber \\
&\quad + {K_i}\displaystyle{\int {\left( {{x_p} - x} \right)dt}} + {K_d}({{\dot x}_p} - \dot x) \nonumber \\
&  =  - \Gamma {\rm{sgn}}({S_p}) + {K_p}{{\dot e}_p} + {K_i}{e_p} + {K_d}{{\ddot e}_p}.
\end{align}
According to (\ref{3.15}), we have
\begin{align} \label{4.4}
{{\dot S}_q} &=  - {K_p}({x_p} - x) - {K_i}\displaystyle{\int {\left( {{x_p} - x} \right)dt} } \nonumber \\
&\quad { - {K_d}({{\dot x}_p} - \dot x) - \tilde \tau  + \hat {\dot \tau} } \nonumber \\
&  =  - {K_p}{{\dot e}_p} - {K_i}{e_p} - {K_d}{{\ddot e}_p} - \tilde \tau  + \hat {\dot \tau }.
\end{align}
Also, combining (\ref{3.13}) and (\ref{3.14}), it follows that
\begin{align} \label{4.5-3}
{S_p} = {S_q} - ({\ddot e_p} + {c_1}{\dot e_p} + {c_2}{e_p}).
\end{align}
Integrating (\ref{4.2})-(\ref{4.5-3}), the derivatives of $V_1$ and $V_2$ are:
\begin{align}
{\dot V}_1 &= {S_p}( - \Gamma {\rm{sgn}}({S_p}) + {K_p}{{\dot e}_p} + {K_i}{e_p} + {K_d}{{\ddot e}_p})\nonumber\\
&\quad  + {S_q}( - {K_p}{{\dot e}_p} - {K_i}{e_p} - {K_d}{{\ddot e}_p} - \tilde \tau  + \hat {\dot \tau} ) \nonumber\\
 &=  - \Gamma |{S_p}| + (\dot \tau  - \tilde \tau  - \tilde {\dot \tau} ){S_q}\nonumber\\
&\quad + ({K_p}{{\dot e}_p} + {K_i}{e_p} + {K_d}{{\ddot e}_p})({S_p} - {S_q})\nonumber\\
 &=  - \Gamma |{S_p}| + (\dot \tau  - \tilde \tau  - \tilde {\dot \tau} ){S_q}\nonumber\\
&\quad  + ({K_p}{{\dot e}_p} + {K_i}{e_p} + {K_d}{{\ddot e}_p})( - {{\ddot e}_p} - c_1{{\dot e}_p} - c_2 {e_p})\nonumber\\
&= - \Gamma |{S_p}| + (\dot \tau  - \tilde \tau  - \tilde {\dot \tau} ){S_q} - {K_d}\ddot e_p^2 - {K_p}{c_1}\dot e_p^2 \nonumber\\
&\quad - {K_i}{c_2}e_p^2 - ({K_p} + {K_d}{c_1}){{\dot e}_p}{{\ddot e}_p} \nonumber\\
&\quad - ({K_i} + {K_d}{c_2}){e_p}{{\ddot e}_p} - ({K_p}{c_2} + {K_i}{c_1}){e_p}{{\dot e}_p}.\label{4.5}\\
 {{\dot V}_2}& = ({K_p} + {K_d}{c_1} - {K_i} - {K_d}{c_2}){{\dot e}_p}{{\ddot e}_p} \nonumber\\
&\quad  + ({K_p}{c_2} + {K_i}{c_1} - {K_i} - {K_d}{c_2}){e_p}{{\dot e}_p} \nonumber\\
&\quad  + ({K_i} + {K_d}{c_2})({e_p}{{\dot e}_p} + {e_p}{{\ddot e}_p} + \dot e_p^{\rm{2}}\; + {{\dot e}_p}{{\ddot e}_p}) \nonumber\\
& = ({K_p} + {K_d}{c_1}){{\dot e}_p}{{\ddot e}_p} + ({K_p}{c_2} + {K_i}{c_1}){e_p}{{\dot e}_p} \nonumber\\
&\quad  + ({K_i} + {K_d}{c_2})\dot e_p^{\rm{2}}\; + ({K_i} + {K_d}{c_2}){e_p}{{\ddot e}_p}. \label{4.5-1}
\end{align}
Then, it follows that
\begin{align}
 \dot V& = {{\dot V}_1} + {{\dot V}_2} \nonumber\\
 & =  - \Gamma |{S_p}| + (\dot \tau  - \tilde \tau  - \tilde {\dot \tau} ){S_q} - {K_d}\ddot e_p^2 \nonumber\\
&\quad  - ({K_p}{c_1} - {K_i} - {K_d}{c_2})\dot e_p^2 - {K_i}{c_2}e_p^2 \nonumber\\
 & = - \Gamma |{S_p}| + (\dot \tau  - \tilde \tau  - \tilde {\dot \tau} ){S_q} - {K_d}\ddot e_p^2 \nonumber\\
&\quad  - \varpi \dot e_p^2 - {K_i}{c_2}e_p^2.\label{4.5-2}
\end{align}

Because
\begin{align}
 \Gamma  &\ge {\lambda _2}({K_p} + {K_i} + {K_d}) \nonumber \\
&  = \frac{{({K_p} + {K_i} + {K_d})({c_1} + {c_2} + 1)}}{{\min \{ {K_i}{c_2},\varpi ,{K_d}\} }}(\varepsilon  + {\lambda _1}) \nonumber \\
& \ge (\varepsilon  + {\lambda _1}).
\end{align}
it follows that
\begin{align}
 \dot V &\le  - \Gamma \left| {{S_q} - ({{\ddot e}_p} + {c_1}{{\dot e}_p} + {c_2}{e_p})} \right| \nonumber\\
&\quad  + (\varepsilon  + \left\| {{\bf{\tilde e}}} \right\|)\left| {{S_q}} \right| - {K_d}\ddot e_p^2 - \varpi \dot e_p^2 - {K_i}{c_2}e_p^2\nonumber \\
&  \le  - (\varepsilon  + {\lambda _1})\left| {{S_q} - ({{\ddot e}_p} + {c_1}{{\dot e}_p} + {c_2}{e_p})} \right| \nonumber\\
&\quad  + (\varepsilon  + \left\| {{\bf{\tilde e}}} \right\|)\left| {{S_q}} \right| - {K_d}\ddot e_p^2 - \varpi \dot e_p^2 - {K_i}{c_2}e_p^2 \nonumber\\
&  \le  - (\varepsilon  + {\lambda _1})\left| {{S_q}} \right| + (\varepsilon  + {\lambda _1})\left| {{{\ddot e}_p} + {c_1}{{\dot e}_p} + {c_2}{e_p}} \right| \nonumber\\
&\quad  + (\varepsilon  + {\lambda _1})\left| {{S_q}} \right| - {K_d}\ddot e_p^2 - \varpi \dot e_p^2 - {K_i}{c_2}e_p^2 \nonumber\\
&  \le (\varepsilon  + {\lambda _1})({c_1} + {c_2} + 1)\left\| {{{\bf{e}}_p}} \right\| - {\lambda _{\min }}({{\bf{K}}_m}){\left\| {{{\bf{e}}_p}} \right\|^2} \nonumber\\
&  =  - \left\| {{{\bf{e}}_p}} \right\|[{\lambda _{\min }}({{\bf{K}}_m})\left\| {{{\bf{e}}_p}} \right\| - (\varepsilon  + {\lambda _1})({c_1} + {c_2} + 1)] \label{4.6}
\end{align}
where ${{\bf{e}}_p} = {{\bf{X}}_p} - {\bf{X}} = {\left[ {\begin{array}{*{20}{c}}
   {{e_p}} & {{{\dot e}_p}} & {{{\ddot e}_p}}  \\
\end{array}} \right]^T}$.
It is easy to see that after a sufficiently long time
\begin{align} \label{e_norm}
\left\| {{{\bf{e}}_p}} \right\| \le {\lambda _2},
\end{align}
where \[{\lambda _2} = \frac{{(\varepsilon  + {\lambda _1})({c_1} + {c_2} + 1)}}{{{\lambda _{\min }}({{\bf{K}}_m})}}.\]
Therefore, $\left\| {{{\bf{e}}_p}} \right\|$ is uniformly ultimately bounded.

Define a new Lyapunov candidate as:
\begin{align}
{V_3} = S_p^2.
\end{align}
It follows from (\ref{4.3}) that
\begin{align}
 {{\dot V}_3} &= {S_p}{{\dot S}_p} \nonumber\\
& =\frac{{{S_p}}}{{{m_p}}}( - \Gamma {\mathop{\rm sgn}} ({S_p}) + {K_p}{{\dot e}_p} + {K_i}{e_p} + {K_d}{{\ddot e}_p}) \nonumber\\
&  \le  - \frac{\Gamma }{{{m_p}}}|{S_p}| + \frac{{\lambda _2}}{{{m_p}}}{({K_p} + {K_i} + {K_d})}{S_p} \nonumber\\
& \le 0.
 \end{align}
When $||{\bf{e}}_p||$ is uniformly ultimately bounded, the achievement of a sliding motion on the surface (\ref{3.14}) is guaranteed. This completes the proof.
\end{proof}

\begin{remark}
The stability analysis of the system has two steps. First, the norm of the tracking error between the proxy states ${\bf{X}}_p$ and the system states ${\bf{X}}$ is uniformly ultimately bounded, which indicates that the system states converge to the proxy states. Then, the achievement of a sliding motion on the surface (\ref{3.14}) means that the proxy can track the reference. In summary, the system states approach the reference, and the stability of the closed-loop system is guaranteed.
\end{remark}

\begin{corollary}
If Inequality (\ref{e_norm}) holds, and initially $x_p = x_d$, then, as the proxy mass $m_p$ increases, the upper bound of $S_q$ will gradually approach a bound associated with the system's disturbances.
\begin{align} \label{4.14}
\mathop {\lim }\limits_{{m_p} \to \infty } |{S_q}| \le {\lambda _2}(c_1 + c_2 + 1)
\end{align}
\end{corollary}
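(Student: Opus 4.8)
The plan is to start from the algebraic identity \eqref{4.5-3}, which after rearranging reads $S_q = S_p + (\ddot e_p + c_1\dot e_p + c_2 e_p)$. Since $\|\cdot\|$ is the $1$-norm here, $|\ddot e_p| + c_1|\dot e_p| + c_2|e_p| \le (c_1+c_2+1)\|{\bf e}_p\|$, so the triangle inequality gives $|S_q| \le |S_p| + (c_1+c_2+1)\|{\bf e}_p\|$. Because Inequality \eqref{e_norm} is assumed, $\|{\bf e}_p\| \le \lambda_2$, hence $|S_q| \le |S_p| + \lambda_2(c_1+c_2+1)$. The statement therefore reduces to showing that the residual term $|S_p|$ can be driven to zero as $m_p$ grows.

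To control $|S_p|$ I would reuse the reaching estimate already obtained in the proof of Theorem~1: with $V_3 = S_p^2$ and \eqref{4.3}, and using \eqref{e_norm} to bound $|K_p\dot e_p + K_i e_p + K_d\ddot e_p| \le (K_p+K_i+K_d)\lambda_2$ together with the hypothesis $\Gamma \ge \lambda_2(K_p+K_i+K_d)$, one gets $\dot V_3 \le -\tfrac{2}{m_p}\bigl(\Gamma - \lambda_2(K_p+K_i+K_d)\bigr)|S_p| \le 0$. Since $x_p = x_d$ initially, $S_p(0) = 0$, so $V_3$ stays at zero and $S_p \equiv 0$: the sliding surface \eqref{3.14} is invariant, the virtual coupling being unable to push the proxy off it precisely because $\Gamma$ dominates $\lambda_2(K_p+K_i+K_d)$. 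Substituting $S_p \equiv 0$ into the bound above yields $|S_q| \le \lambda_2(c_1+c_2+1)$, which establishes \eqref{4.14}. More generally, any off-surface excursion of $S_p$ obeys $|\dot S_p| \le 2\Gamma/m_p$, so the proxy's sliding variable becomes increasingly pinned at zero as $m_p\to\infty$ and the extra term $|S_p|$ in the bound on $|S_q|$ decreases monotonically to zero — which is the sense in which the bound ``gradually approaches'' $\lambda_2(c_1+c_2+1)$.

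The main obstacle I anticipate is the nonsmooth term $\mathrm{sgn}(S_p)$: rigorously asserting that $S_p = 0$ is invariant (equivalently, that $|S_p|$ cannot increase once it has reached zero) requires a Filippov/differential-inclusion argument, or an estimate of the one-sided derivative of $|S_p|$, and it is exactly here that $\Gamma \ge \lambda_2(K_p+K_i+K_d)$ is essential, ensuring that the virtual-coupling term $u_l = K_p\dot e_p + K_i e_p + K_d\ddot e_p$ never exceeds $\Gamma$ in magnitude on the set where \eqref{e_norm} holds. A secondary point to pin down is the precise meaning of ``initially $x_p = x_d$'': it should be read as also fixing $\dot x_p(0)$ and the integral term so that $S_p(0) = 0$, since otherwise the argument only delivers $|S_q| \le |S_p(0)| + \lambda_2(c_1+c_2+1)$.
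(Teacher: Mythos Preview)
Your proposal is correct and in fact sharper than the paper's own argument. The decomposition via \eqref{4.5-3} and the bound $|S_q|\le |S_p|+\lambda_2(c_1+c_2+1)$ match the paper exactly. The difference lies in how $|S_p|$ is handled. The paper does not invoke the $V_3$ reaching law or the invariance of $S_p=0$; instead it simply bounds $|\dot S_p|$ by $m_p^{-1}$ times a bounded quantity (essentially your ``secondary'' observation $|\dot S_p|\le 2\Gamma/m_p$, though stated without the explicit constant), then integrates over a finite experiment duration $t_f$ from the zero initial condition to conclude $\lim_{m_p\to\infty}|S_p|=0$. Your primary route---using $\dot V_3\le 0$ and $S_p(0)=0$ to get $S_p\equiv 0$---actually shows the bound \eqref{4.14} holds for \emph{every} $m_p>0$, not just in the limit, so it is strictly stronger and makes the role of $m_p$ somewhat moot. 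The paper's integration argument, by contrast, genuinely needs $m_p\to\infty$ (and, tacitly, a fixed finite horizon $t_f$), which better matches the corollary's phrasing about the bound ``gradually approaching'' its limit.

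Your caveats are well placed: the paper glosses over both the Filippov issue at $S_p=0$ and the gap between ``$x_p=x_d$ initially'' and ``$S_p(0)=0$'' (it asserts the initial value of $S_p$ is the initial value of $x_d-x_p$, which is not literally what \eqref{3.14} says). So you have identified real soft spots that the paper does not address.
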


\begin{proof}
From (\ref{4.3}), it follows that
\begin{align} \label{4.8}
|{\dot S_p}| = \frac{1}{{{m_p}}}| - \Gamma {\mathop{\rm sgn}} ({S_p}) + {K_p}{\dot e_p} + {K_i}{e_p} + {K_d}{\ddot e_p}|
\end{align}
Since the system is globally uniformly ultimately bounded, we have
\begin{align} \label{4.9}
\mathop {\lim }\limits_{{m_p} \to \infty } |{{\dot S}_p}| = 0.
\end{align}
The proxy mass $m_p$ is a fixed value in each experiment. Let $t_f$ be the finite duration of the experiment. Then,
\begin{align} \label{4.10}
{S_p} = \int_0^{{t_f}} {{{\dot S}_p}} dt + \varpi
\end{align}
where $\varpi $ is the initial value of $x_d - x_p$, which equals zero. Hence, it follows that
\begin{align} \label{4.11}
|{S_p}| = |\int_0^{{t_f}} {{{\dot S}_p}} dt| \le \int_0^{{t_f}} {|{{\dot S}_p}} |dt.
\end{align}
Combining (\ref{4.9}) and (\ref{4.11}), we can obtain
\begin{align} \label{4.12}
\mathop {\lim }\limits_{{m_p} \to \infty } |{S_p}| = 0.
\end{align}
Considering (\ref{4.5-3}) and (\ref{e_norm}), after a sufficiently long time
\begin{align}
 |{S_q}| &\le |{S_p}| + |{{\ddot e}_p} + {c_1}{{\dot e}_p} + {c_2}{{\ddot e}_p}| \nonumber \\
&  \le |{S_p}| + {\lambda _2}(c_1 + c_2 + 1). \label{4.13}
\end{align}
Finally,
\begin{align} \label{4.14}
\mathop {\lim }\limits_{{m_p} \to \infty } |{S_q}| \le {\lambda _2}(c_1 + c_2 + 1).
\end{align}
According to the above results, when the proxy mass $m_p$ approaches positive infinity, the upper bound of $S_q$ will approach a bound associated with the disturbances of the system.
\end{proof}

Therefore, $m_p$ can be used to adjust the tracking speed of the proxy. Normally, it should be sufficiently large, so that the proxy trajectory will track the reference accurately, due to $|{S_p}| \to 0$. In this case, the IDO-PSMC is like a stable PID-based algorithm, in which $S_q$ is limited to a relatively small bound.

\section{Experiments}

Experiments are performed in this section to demonstrate the performance of the proposed IDO-PSMC.

\subsection{Experiment Setup}

The core part of the physical hardware system is an xPC target from MathWorks. It enables a host computer  installed with MATLAB/SIMULINK, C compiler and applied models to generate executable codes, while a target executes the generated code in real-time.

The PMA in the experiments, shown in Fig.~\ref{physical_platform}, was Festo DMSP-20-200N-RM-RM fluidic muscle with an internal diameter of 20 mm, nominal length of 200 mm, and an operating pressure range from 0 to 6 bar. The Festo VPPM-6L-L-1-G18-0L10H-V1P proportional valve was used to regulate the pressure of the compressed air inside the PMA. The displacement sensor was GA-75, whose measurement range was 0-150 mm.

\begin{figure}[htpb] \centering
\includegraphics[width=3.2in,height=2.5in]{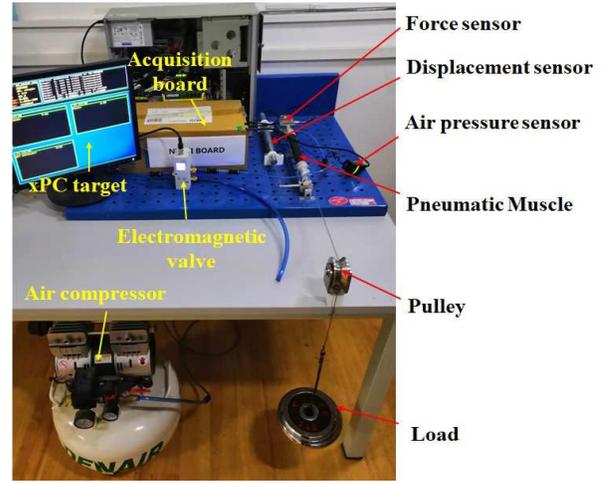}
\caption{The PMA system.}
\label{physical_platform}
\end{figure}

The proposed method does not require an accurate three-element model of the PMA. So, we used the identified parameters of a similar PMA in \cite{Zhu:2017} (see Table I). The damping coefficient $b(P)$ is dependent on whether the PMA is being inflated or deflated, which corresponds to two sets of $b_0$ and $b_1$. The spring coefficient $k(P)$ is a piecewise linear function at point $P = 325420$ Pa. Hence there are also two different sets of $k_0$, $k_1$ for $P < 325420$ Pa and $P > 325420$ Pa, respectively.

We designed two reference trajectories. The first was a fixed frequency sinusoid:
\begin{align}
{x_d} = {A_x}\sin (2\pi f_xt) + {B_x}
\end{align}
where $A_x = 0.015$ m, $f_x = 0.25$ Hz, and $B_x = 0.015$ m. The second was a sine wave whose frequency changed linearly from $0.1$ Hz to $0.5$ Hz within 20 s. The sampling time was set to 0.001 s.

\begin{table}[htpb] \centering
\caption{The model parameters.}
\label{table_model_parameters}
\begin{tabular}{c|c|c|c}
\hline
\hline
Parameter & Value (Unit) & Parameter & Value (Unit) \\
\hline
$f_0 $          & $-202.32$ (N)          &    $f_1 $         & $0.00721$ (N/Pa)        \\
$k_{01} $       & $18063.0$ (N/m)        &    $k_{02} $    & $0.01051$ (N/(m.Pa))      \\
$k_{11} $       & $-0.2132$ (N/m)        &    $k_{12} $    & $90638.0$ (N/(m.Pa))       \\
$b_{0i} $     & $6435.31$ (N.s/m)      &   $b_{1i} $    & $0.10023$ (N.s/(m.Pa))      \\
$b_{0d} $     & $2522.01$ (N.s/m)      &    $b_{1d} $    & $0.00321$ (m.s)       \\
\hline
\hline
\end{tabular}
\end{table}

The maximum absolute error (MAE) and the integral of absolute error (IAE) were used as our performance measures:
\begin{align}
MAE{R^a} &= Max(|{x_d}(t) - x(t)|_{t = 1}^N) \\
IAE{R^b} &= \frac{1}{N}\sum\limits_{t = 1}^N {|{x_d}(t) - x(t)|}
\end{align}

\subsection{Experimental Results}

Several experiments were performed to verify the effectiveness of the proposed control strategy in handling the disturbances/uncertainties. First, experiments were conducted to verify Corollary~1 based on different $m_p$ values. Then, both the fixed frequency sinusoid and varying frequency sinusoid were used as the reference trajectories to show the superiority of the proposed method, compared with PSMC, SMC, and DO-SMC. Finally, experiments of the PMA with different loads (0, 2.5, 5 kg) were performed to demonstrate the robustness of the IDO-PSMC. The parameters of all the control strategies, IDO-PSMC, PSMC, DO-SMC, and SMC, were optimized by the FA. The following set of control parameters of the IDO-PSMC were used in all experiments: ${\Gamma} = 14218.8$, $c_1 = 177.4$, $c_2 = 174.4$, ${K_p}=2473.5$, ${K_i}=1916$, ${K_d}=194.2$, $l_1=15952$, and $l_2=0$.

Fig.~\ref{exp_m_p} shows the experimental results when $m_p=\{0.5, 1.0, 5.0, 10.0, 15.0\}$. As $m_p$ increased, the tracking accuracy improved, and the variation of $S_q$ also significantly decreased. The control performance saturated when $m_p$ was sufficient large. The MAEs and IAEs are shown in Table \ref{table_exp_m_p}.

\begin{figure}[htpb] \centering
\includegraphics[width=3.4in,height=2.125in]{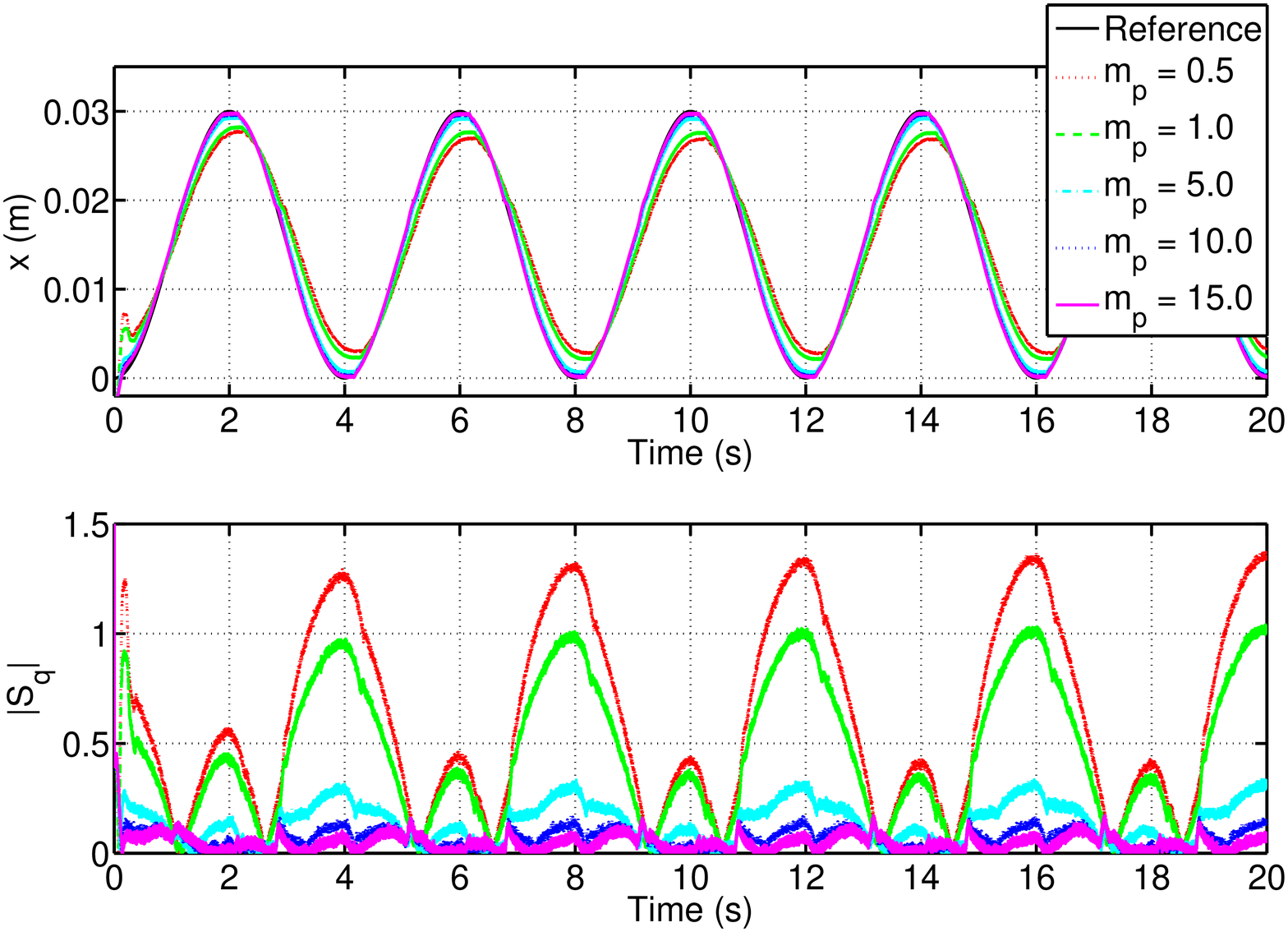}
\caption{Tracking performance of the IDO-PSMC with different $m_p$ values.}
\label{exp_m_p}
\end{figure}

\begin{table}[htpb] \centering
\caption{Tracking performance of the IDO-PSMC with different $m_p$ values.}
\label{table_exp_m_p}
\begin{tabular}{l|c|c}
\hline
\hline
              & MAE   & IAE    \\
\hline
$m_p = 0.5$      &  4.5 $\times {10^{-3}}$ (m)     & 2.6 $\times {10^{-3}}$ (m)                          \\

$m_p = 1.0$          & 3.4 $\times {10^{-3}}$ (m)   &  2.0 $\times {10^{-3}}$ (m)                              \\

$m_p = 5.0$        & 1.1 $\times {10^{-3}}$ (m)              &     5.3 $\times {10^{-4}}$ (m)                         \\

$m_p = 10.0$          & 7.1 $\times {10^{-4}}$ (m)              &       2.1 $\times {10^{-4}}$ (m)                            \\

$m_p = 15.0$          & 5.5 $\times {10^{-4}}$ (m)              &       1.6 $\times {10^{-4}}$ (m)                            \\
\hline
\hline
\end{tabular}
\end{table}

Fig.~\ref{exp_fix_freq_no_load} shows the performance of different control strategies with the fixed-frequency ($0.25$ Hz) sinusoidal reference. We replaced the $sgn$ function of the SMC and DO-SMC with a $sat$ function to alleviate chattering, but the SMC and DO-SMC still resulted in chattering, due to the inaccurate model parameters in Table~\ref{table_model_parameters}. The basic PSMC enabled the PMA to track the reference with acceptable precision, since it is a model-free strategy, not affected by inaccurate model parameters. However, its performance was still much worse than the IDO-PSMC. The corresponding MAEs and IAEs of all four control strategies are shown in Table~\ref{table_exp_tracking}. Since we were more interested in the steady state performance, the values were calculated from 2 s to 20 s.

\begin{figure}[htpb] \centering
\includegraphics[width=3.4in,height=2.125in]{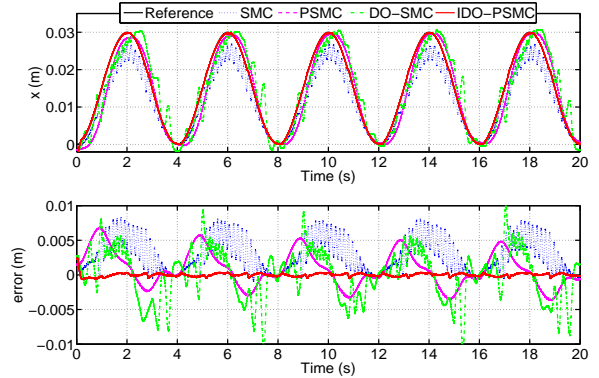}
\caption{Tracking performance of different control strategies with the fixed-frequency sinusoidal reference ($0.25$ Hz).}
\label{exp_fix_freq_no_load}
\end{figure}

\begin{table}[htpb] \centering
\caption{Tracking performance of different control strategies with the fixed-frequency sinusoidal reference ($0.25$ Hz).}
\label{table_exp_tracking}
\begin{tabular}{l|c|c}
\hline
\hline
              & MAE  & IAE   \\
\hline
IDO-PSMC      & 5.5 $\times {10^{-4}}$ (m)   & 1.6 $\times {10^{-4}}$ (m)                             \\

PSMC          & 5.8 $\times {10^{-3}}$ (m)   & 1.9 $\times {10^{-3}}$ (m)                               \\

DO-SMC        & 1.1 $\times {10^{-2}}$ (m)               & 3.0 $\times {10^{-3}}$ (m)                            \\

SMC           & 8.1 $\times {10^{-3}}$ (m)               & 2.8 $\times {10^{-3}}$ (m)                                 \\
\hline
\hline
\end{tabular}
\end{table}

Fig.~\ref{exp_var_freq_no_load} shows the tracking performance of different control strategies with the varying-frequency (0.1-0.5Hz) sinusoidal reference. Again, the proposed IDO-PSMC performed the best among all four control strategies, although it had a large oscillation at the beginning. This is because that it needs some time to drive the states of the PMA into the boundary [see (\ref{e_norm})]. After that, the proposed IDO-PSMC can handle the system disturbances/uncertainties, and achieve accurate tracking. The corresponding MAEs and IAEs of all four control strategies are shown in Table~\ref{table_exp_freq_tracking}. Again, the values were calculated from 2 s to 20 s.

\begin{figure}[htpb] \centering
\includegraphics[width=3.4in,height=2.125in]{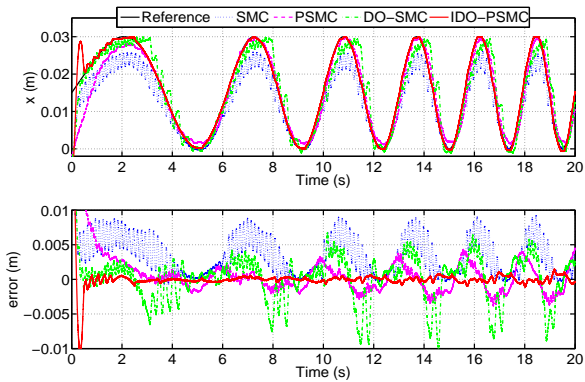}
\caption{Tracking performance of different control strategies with the varying-frequency sinusoidal reference ($0.1-0.5$ Hz).} \label{exp_var_freq_no_load}
\end{figure}

\begin{table}[htpb] \centering
\caption{Tracking performance of different control strategies with the varying-frequency sinusoidal reference ($0.1-0.5$ Hz).} \label{table_exp_freq_tracking}
\begin{tabular}{l|c|c}
\hline
\hline
              & MAE   & IAE   \\
\hline
IDO-PSMC      & 1.5 $\times {10^{-3}}$ (m)   & 2.9 $\times {10^{-4}}$ (m)                             \\

PSMC          & 4.5 $\times {10^{-3}}$ (m)   & 1.4 $\times {10^{-3}}$ (m)                               \\

DO-SMC        & 1.0 $\times {10^{-2}}$ (m)               & 2.6 $\times {10^{-3}}$ (m)                            \\

SMC           & 9.3 $\times {10^{-3}}$ (m)               & 3.0 $\times {10^{-3}}$ (m)                                 \\
\hline
\hline
\end{tabular}
\end{table}

To further investigate the robustness of the proposed control strategy, different loads were attached to the PMA, while tracking the varying-frequency (0.1-0.5 Hz) reference. The IDO-PSMC results are shown in Fig.~\ref{exp_var_freq_load} and Table~\ref{table_exp_load_freq_tracking}. Generally they were very robust to changing loads. However, a closer-look reveals that as the load increased, the tracking performance slightly deteriorates. This is because the fixed parameters of the DO can only handle a certain amount of disturbances. When the disturbance is too much, the parameters of DO have to be re-tuned.

\begin{figure}[htpb] \centering
\includegraphics[width=3.4in,height=2.125in]{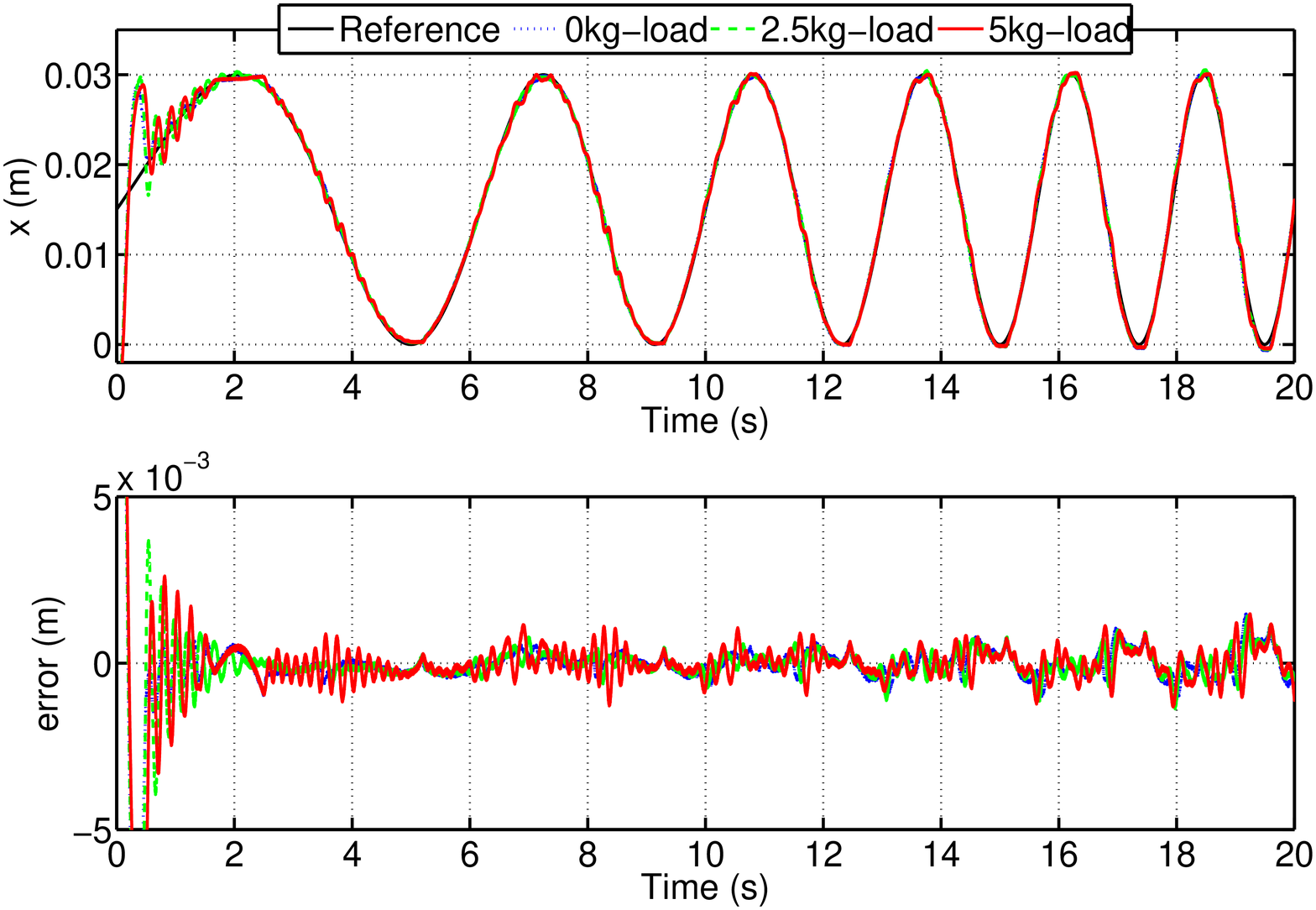}
\caption{Tracking performance of the IDO-PSMC with the PMA attaching different loads.}
\label{exp_var_freq_load}
\end{figure}

\begin{table}[htpb] \centering
\caption{Tracking performance of the IDO-PSMC with the PMA attaching different loads.}
\label{table_exp_load_freq_tracking}
\begin{tabular}{l|c|c}
\hline
\hline
              & MAE   & IAE    \\
\hline
0kg-load      & 1.5 $\times {10^{-3}}$ (m)   & 2.9 $\times {10^{-4}}$ (m)                             \\

2.5kg-load          & 1.4 $\times {10^{-3}}$ (m)   & 2.6 $\times {10^{-4}}$ (m)                               \\

5kg-load        & 1.5 $\times {10^{-3}}$ (m)               & 3.3 $\times {10^{-4}}$ (m)                            \\
\hline
\hline
\end{tabular}
\end{table}

\section{Conclusion}

This paper presented a robust control strategy, IDO-PSMC, for the PMA. The DO was used to deal with the estimated disturbance. The tracking states of the PMA were proven to be uniformly ultimately bounded according to the Lyapunov theorem. A constrained FA was used to tune the controller parameters automatically. Extensive experiments were conducted to demonstrate the superior performance of the proposed IDO-PSMC. Compared with other control strategies, IDO-PSMC can adequately handle the uncertainties/disturbances, and achieve better control performance in high-precision tracking. To the best of our knowledge, this is the first attempt to control the PMA with a varying-frequency reference trajectory and different loads.

\appendices

\ifCLASSOPTIONcaptionsoff
  \newpage
\fi


\begin{thebibliography}{99}
\bibitem{Dzahir:2014} Dzahir, A. M. and S.-i. Yamamoto, ``Recent Trends in Lower-Limb Robotic Rehabilitation Orthosis: Control Scheme and Strategy for Pneumatic Muscle Actuated Gait Trainers,", \emph{Robotics}, vol. 3, no. 2, pp. 120-148, Mar. 2014.
\bibitem{Andrikopoulos:2011} G. Andrikopoulos, G. Nikolakopoulos and S. Manesis, ``A Survey on applications of Pneumatic Artificial Muscles," \emph{2011 19th Mediterranean Conference on Control \& Automation (MED)}, Corfu, pp. 1439-1446, 2011.
\bibitem{Caldwell:1995} D. G. Caldwell, G. A. Medrano-Cerda and M. Goodwin, ``Control of pneumatic muscle actuators," \emph{IEEE Control Syst.}, vol. 15, no. 1, pp. 40-48, Feb. 1995.
\bibitem{Andrikopoulos:2014} G. Andrikopoulos, G. Nikolakopoulos and S. Manesis, ``Advanced Nonlinear PID-Based Antagonistic Control for Pneumatic Muscle Actuators," \emph{IEEE Trans. Ind. Electron.}, vol. 61, no. 12, pp. 6926-6937, Dec. 2014.
\bibitem{Zhang:2017} D. Zhang, X. Zhao and J. Han, ``Active Model-Based Control for Pneumatic Artificial Muscle," \emph{IEEE Trans. Ind. Electron.}, vol. 64, no. 2, pp. 1686-1695, Feb. 2017.
\bibitem{Huang:2016} J. Huang, J. Qian, L. Liu, Y. Wang, C. Xiong, S. Ri, ``Echo state network based predictive control with particle swarm optimization for pneumatic muscle actuator," \emph{Journal of the Franklin Institute}, vol. 353, no. 12, pp. 2761-2782, 2016.
\bibitem{Cao:2018} J. Cao, S. Q. Xie and R. Das, ``MIMO Sliding Mode Controller for Gait Exoskeleton Driven by Pneumatic Muscles," \emph{IEEE Trans. Control Syst. Technol.}, vol. 26, no. 1, pp. 274-281, Jan. 2018.
\bibitem{Huang:2018} J. Huang, Y. Cao, C. Xiong and H. Zhang, ``An Echo State Gaussian Process-Based Nonlinear Model Predictive Control for Pneumatic Muscle Actuators," \emph{IEEE Trans. Autom. Sci. Eng.}, pp. 1-14, Oct. 2018.
\bibitem{Xie:2011} S.Q. Xie, P.K. Jamwal, ``An iterative fuzzy controller for pneumatic muscle driven rehabilitation robot," \emph{Expert Systems with Applications}, vol. 38, no. 7, pp. 8128-8137, Jul. 2011.
\bibitem{Zhu:2017} L. Zhu, X. Shi, Z. Chen, H. Zhang and C. Xiong, ``Adaptive Servomechanism of Pneumatic Muscle Actuators With Uncertainties," \emph{IEEE Trans. Ind. Electron.}, vol. 64, no. 4, pp. 3329-3337, April 2017.
\bibitem{Kikuuwe:2010} R. Kikuuwe, S. Yasukouchi, H. Fujimoto and M. Yamamoto, ``Proxy-Based Sliding Mode Control: A Safer Extension of PID Position Control," \emph{IEEE Trans. Robot.}, vol. 26, no. 4, pp. 670-683, Aug. 2010.
\bibitem{Gu:2015} G. Gu, L. Zhu, C. Su, H. Ding and S. Fatikow, ``Proxy-Based Sliding-Mode Tracking Control of Piezoelectric-Actuated Nanopositioning Stages," \emph{IEEE/ASME Trans. Mechatron.}, vol. 20, no. 4, pp. 1956-1965, Aug. 2015.
\bibitem{Damme:2009} M. V. Damme, B. Vanderborght, B. Verrelst, R. V. Ham, F. Daerden, and D. Lefeber, ``Proxy-based Sliding Mode Control of a Planar Pneumatic Manipulator," \emph{The International Journal of Robotics Research}, vol. 28, no. 2, pp. 266-284, Feb. 2009.
\bibitem{Chen:2016} G. Chen, Z. Zhou, B. Vanderborght, N. Wang \& Q. Wang, ``Proxy-based sliding mode control of a robotic ankle-foot system for post-stroke rehabilitation," \emph{Advanced Robotics}, vol. 30, no. 15, pp. 992-1003, 2016.
\bibitem{Ohishi:1987} K. Ohishi, M. Nakao, K. Ohnishi and K. Miyachi, ``Microprocessor-Controlled DC Motor for Load-Insensitive Position Servo System," \emph{IEEE Trans. Ind. Electron.}, vol. IE-34, no. 1, pp. 44-49, Feb. 1987.
\bibitem{Han:2009} J. Han, ``From PID to Active Disturbance Rejection Control," \emph{IEEE Trans. Ind. Electron.}, vol. 56, no. 3, pp. 900-906, March 2009.
\bibitem{Ginoya:2014} D. Ginoya, P. D. Shendge and S. B. Phadke, ``Sliding Mode Control for Mismatched Uncertain Systems Using an Extended Disturbance Observer," \emph{IEEE Trans. Ind. Electron.}, vol. 61, no. 4, pp. 1983-1992, April 2014.
\bibitem{Huang:2015} J. Huang, S. Ri, L. Liu, Y. Wang, J. Kim and G. Pak, ``Nonlinear Disturbance Observer-Based Dynamic Surface Control of Mobile Wheeled Inverted Pendulum," \emph{IEEE Trans. Control Syst. Technol.}, vol. 23, no. 6, pp. 2400-2407, Nov. 2015.
\bibitem{Huang:2018-1} J. Huang, S. Ri, T. Fukuda and Y. Wang, ``A Disturbance Observer based Sliding Mode Control for a Class of Underactuated Robotic System With Mismatched Uncertainties," \emph{IEEE Trans. Autom. Control}, pp. 1-1, Aug. 2018.
\bibitem{Chen:2017} M. Chen, S. Y. Shao and B. Jiang, ``Adaptive Neural Control of Uncertain Nonlinear Systems Using Disturbance Observer," \emph{IEEE Trans. Cybernetics}, vol. 47, no. 10, pp. 3110-3123, Oct. 2017.
\bibitem{Reynolds:2003} D. B. Reynolds, D. W. Repperger, C. A. Phillips, and G. Bandry, ``Modeling the Dynamic Characteristics of Pneumatic Muscle," \emph{Annals of Biomedical Engineering}, vol. 31, pp. 310-317, Mar. 2003.
\bibitem{Yang:2009} Xin-She Yang, ``Firefly Algorithms for Multimodal Optimization," \emph{2009 International Symposium on Stochastic Algorithms}. SAGA 2009., pp. 169-178, 2009.
\end{thebibliography}
\end{document}